\definecolor{Red}{rgb}{1,0,0}
\definecolor{Blue}{rgb}{0,0,1}
\definecolor{Olive}{rgb}{0.41,0.55,0.13}
\definecolor{Green}{rgb}{0,1,0}
\definecolor{MGreen}{rgb}{0,0.8,0}
\definecolor{DGreen}{rgb}{0,0.55,0}
\definecolor{Yellow}{rgb}{1,1,0}
\definecolor{Cyan}{rgb}{0,1,1}
\definecolor{Magenta}{rgb}{1,0,1}
\definecolor{Orange}{rgb}{1,.5,0}
\definecolor{Violet}{rgb}{.5,0,.5}
\definecolor{Purple}{rgb}{.75,0,.25}
\definecolor{Brown}{rgb}{.75,.5,.25}
\definecolor{Grey}{rgb}{.5,.5,.5}
\newtheorem{theorem}{Theorem}
\newtheorem{lemma}{Lemma}
\newtheorem{remark}{Remark}
\newtheorem{definition}{Definition}
\newcommand{\pen}{{P_e^{(n)}}}
\newcommand{\Mh}{{\hat{M}}}
\let\P\relax
\DeclareMathOperator\P{P}
\def\textiid{i.i.d.\@\xspace}
\newcommand\iid{\ifmmode\text{ i.i.d. } \else \textiid \fi}
\newcommand{\qedsymbol}{\rule{1.3ex}{1.3ex}}
\newcommand{\qed}{\mbox{}\hfill\qedsymbol}
\begin{document}

\title{The capacity region of a class of broadcast channels with a  sequence of less noisy receivers}

% author names and affiliations
% use a multiple column layout for up to three different
% affiliations
\author{\IEEEauthorblockN{Zizhou Vincent Wang}
\IEEEauthorblockA{Department of Information Engineering\\
The Chinese University of Hong Kong\\
Sha Tin, N.T., Hong Kong\\
Email: zzwang6@ie.cuhk.edu.hk} \and \IEEEauthorblockN{Chandra Nair}
\IEEEauthorblockA{Department of Information Engineering\\
The Chinese University of Hong Kong\\
Sha Tin, N.T., Hong Kong\\
Email: chandra@ie.cuhk.edu.hk} }

% use for special paper notices
%\IEEEspecialpapernotice{(Invited Paper)}

% make the title area
\maketitle

\begin{abstract}
%\boldmath
The capacity region of a broadcast channel consisting of
$k$-receivers that lie in a less noisy sequence is an open problem,
when $k \geq 3$. We solve this problem for the case $k=3$. Generalizing this result, we prove 
that superposition coding is 
optimal for a class of broadcast channels with a sequence of less
noisy receivers. The key idea is a new information inequality for less noisy receivers, which may be potentially useful in other problems as well.
\end{abstract}
% IEEEtran.cls defaults to using nonbold math in the Abstract.
% This preserves the distinction between vectors and scalars. However,
% if the conference you are submitting to favors bold math in the abstract,
% then you can use LaTeX's standard command \boldmath at the very start
% of the abstract to achieve this. Many IEEE journals/conferences frown on
% math in the abstract anyway.

% no keywords

% For peer review papers, you can put extra information on the cover
% page as needed:
% \ifCLASSOPTIONpeerreview
% \begin{center} \bfseries EDICS Category: 3-BBND \end{center}
% \fi
%
% For peerreview papers, this IEEEtran command inserts a page break and
% creates the second title. It will be ignored for other modes.
\IEEEpeerreviewmaketitle

\section{Introduction}
Consider the problem of reliable communication of $k$ independent messages $M_1,...,M_k$ over a discrete-memoryless broadcast channel (DM-BC), to  $k$-receivers $Y_1, Y_2, \ldots, Y_k$ respectively. A $(2^{nR_1}\times \cdots \times 2^{nR_k}, n)$ code for the DM-BC  consists of: (i) a message set $[1:2^{nR_1}]\times \cdots \times[1:2^{nR_k}]$, (ii) an encoder that assigns a codeword $x^n(m_1,  \ldots, m_k)$ to each message-tuple $(m_1,  \ldots, m_k)$, and (iii) $k$ decoders, decoder $l$ assigns an estimate $\hat{m}_l(y_{l,1}^n) \in [1:2^{nR_l}]$ or an error message $\mathrm{e}$ to each received sequence $y^n_{l,1}$, $1 \leq l \leq k$. We assume that the messages are uniformly distributed over  is uniformly distributed over $[1:2^{nR}]\times \cdots [1:2^{nR_k}]$. The probability of error is defined as $\pen = \P\{\cup_{l=1}^k \Mh_l \neq M\}$.

A rate-tuple $(R_1, \cdots, R_k)$ is said to be achievable if there exists a sequence of $(2^{nR_1}\times \cdots \times 2^{nR_k}, n)$ codes with $\pen \to 0$ as $n\to \infty$. The capacity region is defined as the closure of the union of all achievable rates.

%Consider a discrete memoryless broadcast channel with $k$-receivers $Y_1,..,Y_k$. For formal definitions and discussion of previous results on broadcast channels please refer to \cite{cov72,cov98}.

\smallskip
\begin{definition}
A receiver $Y_s$ is said to be less noisy\cite{kom75} than receiver
$Y_t$ if $I(U;Y_s) \geq I(U;Y_t)$ for all $U \to X \to (Y_s,Y_t)$.
\end{definition}

We denote this relationship(partial-order) by $Y_s \succeq Y_t$.

\begin{remark}
\label{re:om}
Observe that this partial-order only depends  on the marginal distributions $p(y_s|x)$ and $p(y_t|x)$.
\end{remark}

\begin{definition}
A $k$-receiver less noisy broadcast channel  is a $k$-receiver discrete memoryless broadcast channel where the receivers  satisfy the partial order $Y_1 \succeq Y_2 \succeq \cdots \succeq Y_k$.
\end{definition}

The capacity region for the $2$-receiver broadcast channel was established (Proposition 3 in \cite{kom75}) to be the union of rate pairs $(R_1,R_2)$ satisfying
\begin{align*}
R_1 &\leq I(X;Y_1|U) \\
R_2 &\leq I(U;Y_2)
\end{align*} 
over all choices of $(U,X)$ such that $U \to X \to (Y_1,Y_2)$ forms a Markov chain. 

The extension of this result to $k$-receivers is open, $k \geq 3$. In this paper we present a simple proof for the case $k=3$.

Further our proof can also be used to provide an alternate (much-simpler) proof for $k=2$, although it must be noted that the original proof provides a strong-converse while ours provides a weak-converse. A modern-day weak converse proof for the 2-receiver case may also be obtained using the outer bounds in \cite{mar79, elg79, nae07}, however each of these uses Csiszar sum lemma which has no natural generalization to three receivers. Instead our proof relies on a information inequality (Lemma \ref{le:ord}) valid for less noisy-receivers which helps us by-pass the use of Csiszar sum lemma. 

Indeed using this lemma one can also obtain the capacity region for a subset  of $k$-receiver less noisy broadcast channel (which contains the 3-receiver less noisy broadcast channel as well). However for clarity of exposition, we shall first establish the result for the 3-receiver less noisy broadcast channel and then present the general result for the class of $k$-receiver less noisy broadcast channel.

\section{Three-receiver less noisy broadcast channel}

The main result of the paper is the following:
\begin{theorem}
The capacity region of a 3-receiver less noisy discrete memoryless broadcast channel is given by the union of rate triples $(R_1, R_2, R_3)$ satisfying:
\begin{align*}
R_1 &\leq I(X;Y_1|V) \\
R_2 & \leq I(V; Y_2|U) \\
R_3 &\leq I(U;Y_3)
\end{align*}
over all choices of $(U,V,X)$ such that $U \to V \to X \to (Y_1,Y_2, Y_3)$ forms a  Markov chain. Further it suffices to consider $|U| \leq |X| + 1, |V| \leq (|X|+1)^2$.
\end{theorem}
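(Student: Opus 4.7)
Achievability is standard three-layer superposition coding with auxiliaries $U$ (carrying $M_3$), $V$ (carrying $M_2$ given $U$), and $X$ (carrying $M_1$ given $V$). Because $Y_2 \succeq Y_3$ and $Y_1 \succeq Y_2$, receivers $Y_1$ and $Y_2$ can reliably decode the outer cloud $U^n$ at any rate $R_3 < I(U;Y_3)$, which is what allows the single-letter region to take the clean form above. Joint-typicality decoding at each of the three receivers then closes the achievability analysis.

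For the converse I would apply Fano's inequality together with the independence of the three messages to obtain
\begin{align*}
nR_3 &\le I(M_3;Y_3^n) + n\epsilon_n,\\
nR_2 &\le I(M_2;Y_2^n \mid M_3) + n\epsilon_n,\\
nR_1 &\le I(M_1;Y_1^n \mid M_2,M_3) + n\epsilon_n,
\end{align*}
and identify $U_i := (M_3,Y_3^{i-1})$ and $V_i := (M_2,U_i)$. The Markov chain $U_i \to V_i \to X_i \to (Y_{1,i},Y_{2,i},Y_{3,i})$ is automatic from memorylessness. The $R_3$ bound is immediate from the chain rule, since $I(M_3;Y_{3,i}\mid Y_3^{i-1}) \le I(M_3,Y_3^{i-1};Y_{3,i}) = I(U_i;Y_{3,i})$.

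The heart of the converse is single-letterizing the $R_2$ and $R_1$ bounds. Expanding by the chain rule yields sums of the form $\sum_i I(M_2;Y_{2,i}\mid M_3,Y_2^{i-1})$ and $\sum_i I(M_1;Y_{1,i}\mid M_2,M_3,Y_1^{i-1})$, whereas the targets $\sum_i I(V_i;Y_{2,i}\mid U_i)$ and $\sum_i I(X_i;Y_{1,i}\mid V_i)$ condition on $Y_3^{i-1}$ instead. Swapping the past of a better output for the past of a worse one is exactly what Csisz\'ar's sum identity achieves in the two-receiver setting, but that identity has no natural analogue once three output sequences $Y_1^n,Y_2^n,Y_3^n$ enter the picture. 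I would instead invoke Lemma~\ref{le:ord}---the new less-noisy information inequality announced in the abstract---applied once with the pair $Y_2 \succeq Y_3$ for the $R_2$ bound and once with $Y_1 \succeq Y_3$ (or with both $Y_1\succeq Y_2$ and $Y_2\succeq Y_3$ in sequence) for the $R_1$ bound. After this swap, a uniform time-sharing variable $Q \in [1:n]$, with $U := (U_Q,Q)$, $V := (V_Q,Q)$, $X := X_Q$, $Y_j := Y_{j,Q}$, completes the single-letterization. The cardinality bounds $|U| \le |X|+1$ and $|V| \le (|X|+1)^2$ then follow from a nested Fenchel--Eggleston--Carath\'eodory argument, first reducing $|U|$ while preserving the $|X|-1$ constraints fixing $p(x)$ together with $I(U;Y_3)$ and $I(V;Y_2\mid U)+I(X;Y_1\mid V)$, and then reducing $|V|$ slice-by-slice in $u$.

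The main obstacle, and indeed the novel ingredient, is Lemma~\ref{le:ord} itself. Unlike in the degraded case, where $Y_3^{i-1}$ is a stochastic function of $Y_2^{i-1}$ independent of $X_i$ and the step is almost free, here one has only the weaker less-noisy ordering and no such Markov structure. The lemma must leverage this partial order to absorb the mismatch between conditioning on $Y_2^{i-1}$ and on $Y_3^{i-1}$ (and similarly between $Y_1^{i-1}$ and $Y_3^{i-1}$), playing the role that Csisz\'ar's sum identity plays for two receivers. Once this lemma is in hand, the converse above, as well as its extension to the broader $k$-receiver class treated in the sequel, is a short and essentially mechanical manipulation.
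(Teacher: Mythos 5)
Your high-level architecture matches the paper's: Fano plus message independence, an auxiliary identification built from a message and a past output sequence, a new less-noisy ``swap the past'' inequality in place of the Csisz\'ar sum identity, then time-sharing and Fenchel--Eggleston cardinality reduction. But your specific choice of auxiliaries does not work, and the choice is exactly where the paper's proof lives. You set $U_i=(M_3,Y_3^{i-1})$ and $V_i=(M_2,U_i)$, i.e.\ you build everything from the \emph{weakest} receiver's past. With that choice the $R_3$ bound is indeed free and the $R_1$ bound can be pushed through (write $I(M_1;Y_{1,i}|M_2,M_3,Y_1^{i-1}) \le I(X_i;Y_{1,i}|M_2,M_3)-I(Y_1^{i-1};Y_{1,i}|M_2,M_3)$, lower-bound the subtracted term by $I(Y_3^{i-1};Y_{1,i}|M_2,M_3)$ via Part~2 of the lemma with $Y_1\succeq Y_3$, and recombine using $(M_2,M_3,Y_3^{i-1})\to X_i\to Y_{1,i}$). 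The $R_2$ bound, however, breaks: you must convert $\sum_i I(M_2;Y_{2,i}|M_3,Y_2^{i-1})$ into $\sum_i I(M_2;Y_{2,i}|M_3,Y_3^{i-1})$, and writing each term as a difference shows this needs $I(Y_2^{i-1};Y_{2,i}|M_3)\ge I(Y_3^{i-1};Y_{2,i}|M_3)$ (which the lemma gives) \emph{and} $I(M_2,Y_2^{i-1};Y_{2,i}|M_3)\le I(M_2,Y_3^{i-1};Y_{2,i}|M_3)$, i.e.\ $I(Y_2^{i-1};Y_{2,i}|M_2,M_3)\le I(Y_3^{i-1};Y_{2,i}|M_2,M_3)$, which is the lemma in the \emph{wrong} direction. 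There is also no Markov chain $Y_3^{i-1}\to(M_2,M_3,Y_2^{i-1})\to Y_{2,i}$ available to recombine (the channel is only less noisy, not degraded), so ``apply the lemma once with $Y_2\succeq Y_3$'' does not close this step. The paper avoids the problem by building \emph{both} auxiliaries from the middle receiver's past, $U_i=(M_3,Y_2^{i-1})$ and $V_i=(M_2,M_3,Y_2^{i-1})$: then the $R_2$ bound is an exact chain-rule identity needing no inequality at all, the $R_3$ bound uses Part~1 of the lemma with $Y_2\succeq Y_3$ (to upgrade $Y_3^{i-1}$ to $Y_2^{i-1}$ next to $Y_{3,i}$), and the $R_1$ bound uses Part~2 with $Y_1\succeq Y_2$ followed by the data-processing recombination.

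A second gap: the lemma itself is the paper's main contribution, and you neither state it precisely nor prove it, treating it as a black box that ``absorbs the mismatch'' in conditioning. Its actual content is narrower than a conditioning swap: for $Y_s\succeq Y_t$ and any $M$ with $M\to X^n\to(Y_s^n,Y_t^n)$, one has $I(Y_s^{i-1};Y_{t,i}|M)\ge I(Y_t^{i-1};Y_{t,i}|M)$ and $I(Y_s^{i-1};Y_{s,i}|M)\ge I(Y_t^{i-1};Y_{s,i}|M)$, proved by flipping one coordinate of the past at a time and applying the less-noisy definition to the single-letter channel $X_r\to(Y_{s,r},Y_{t,r})$ with the auxiliary $(M,Y_t^{r-1},Y_{s,r+1}^{i-1},Y_{t,i})$. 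It is only usable in the two modes above (adjoin the past to the message, or subtract it and recombine via data processing), which is why the correct auxiliary choice matters; a complete proof would need both the lemma's proof and the corrected identification.
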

\subsection{Achievability} The rate-triples are achievable using superposition coding and jointly typical decoding. The arguments are standard in literature and hence only a minor outline is provided. 

Consider a $(U,V,X)$ such that $U \to V \to X \to (Y_1,Y_2, Y_3)$ forms a  Markov chain. We will show the achievability of any rate-triple satisfying $R_3 < I(U;Y_3), R_2 < I(V;Y_2|U), R_1 < I(X;Y_1|V)$.

The encoding proceeds as follows: 
\begin{itemize}
\item Generate $2^{n{R_3}}$ sequence  $u^n(m_3) \sim \prod_{i=1}^n p_{U}(u_i)$.
\item For each $m_3$, generate $2^{nR_2}$ sequences  $v^n(m_2,m_3)$ distributed according to  $\prod_{i=1}^n p_{V|U}(v_i|u_i)$.
\item Finally for each $(m_2,m_3)$ pair, generate $2^{nR_1}$ $x^n(m_1,m_2,m_3)$ sequences    distributed according to $\prod_{i=1}^n p_{X|V,U}(x_i|v_i,u_i) =  \prod_{i=1}^n p_{X|V}(x_i|v_i)$.
\end{itemize}

Receiver $Y_3$, upon receiving $y_{31}^n$, assigns $\hat{M}_3=m_3$  if there is a unique sequence $u^n(m_3)$ such that the pair $(u^n(m_3),y_{31}^n)$ is jointly typical; otherwise receiver $Y_3$ declares an error. This decoding succeeds with high probability as long as $R_3 < I(U;Y_3)$.

Receiver $Y_2$ performs successive decoding. (This is in general worse than joint decoding, but in this situation successive decoding is enough.) Upon receiving $y_{21}^n$, assigns $\bar{M}_3=m_3$  if there is a unique sequence $u^n(m_3)$ such that the pair $(u^n(m_3),y_{21}^n)$ is jointly typical; otherwise receiver $Y_2$ declares an error. Assuming if finds a unique $u^n(m_3)$ sequence, it then assigns $\hat{M}_2=m_2$  if there is a unique sequence $v^n(m_2,m_3)$ such that the triple $(u^n(m_3), v^n(m_2,m_3),y_{21}^n)$ is jointly typical; otherwise receiver $Y_2$ declares an error. The first step of decoding succeeds with high probability as long as $R_3 < I(U;Y_2)$, but this holds as $I(U;Y_2) \geq I(U;Y_3)$ (since $Y_2$ is a less-noisy receiver than $Y_3$). The second step of decoding succeeds with high probability as long as $R_2 < I(V;Y_2|U)$.  

Similarly, receiver $Y_1$ also performs successive decoding. The three steps of decoding will succeed with high probability as long as the conditions $R_3 < I(U;Y_1), R_2 < I(V;Y_1|U)$, and $R_1 < I(X;Y_1|V,U) = I(X;Y_1|V)$ hold.
Since $Y_1 \succeq Y_2 \succeq Y_3$ the first two conditions are automatically satisfied. This completes the proof of achievability. \qed

\subsection{Converse}
The interesting part of this proof is the converse, and in particular the use of  Lemma \ref{le:ord}  to identify the auxiliary random variables.

\begin{lemma}
\label{le:ord}
Let $Y_s \succeq Y_t$, and $M$ be any random variable such that
$$ M \to X^n \to (Y_{s,1}^n, Y_{t,1}^n) $$
form a Markov chain. Then the following hold:
\begin{enumerate}
\item $ I(Y_{s,1}^{i-1};Y_{t,i}|M) \geq I(Y_{t,1}^{i-1};Y_{t,i}|M), ~ 1 \leq i \leq n.$
\item $ I(Y_{s,1}^{i-1};Y_{s,i}|M) \geq I(Y_{t,1}^{i-1};Y_{s,i}|M), ~ 1 \leq i \leq n.$
\end{enumerate}
\end{lemma}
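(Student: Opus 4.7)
\subsection*{Proof proposal for Lemma \ref{le:ord}}

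The plan is to prove both parts of the lemma in a unified way by a hybrid (telescoping) argument that swaps $Y_{t,j}$ with $Y_{s,j}$ one coordinate at a time, invoking the less noisy property of the marginal channels at each swap.

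Fix $i$ and, for $1\leq j\leq i$, define the interpolating tuples $A_j=(Y_{s,1}^{j-1},Y_{t,j}^{i-1})$, so that $A_1=Y_{t,1}^{i-1}$ and $A_i=Y_{s,1}^{i-1}$. Note that $A_{j+1}$ and $A_j$ differ in exactly one coordinate: $A_{j+1}$ carries $Y_{s,j}$ where $A_j$ carries $Y_{t,j}$. Write $W_j=(Y_{s,1}^{j-1},Y_{t,j+1}^{i-1})$ for the part they share.

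For part 1, I would rewrite the desired inequality as $I(A_i;Y_{t,i}|M)\geq I(A_1;Y_{t,i}|M)$ and telescope over $j=1,\ldots,i-1$. Expanding each $I(A_\cdot;Y_{t,i}|M)$ via the chain rule and canceling the common term $I(W_j;Y_{t,i}|M)$, the inductive step reduces to
\begin{align*}
I(Y_{s,j};Y_{t,i}\mid M,W_j)\;\geq\;I(Y_{t,j};Y_{t,i}\mid M,W_j).
\end{align*}
To establish this I would use the memoryless factorization $p(y_{s,1}^n,y_{t,1}^n\mid x^n)=\prod_{\ell=1}^n p(y_{s,\ell},y_{t,\ell}|x_\ell)$ to argue that, conditional on $X_j$, the pair $(Y_{s,j},Y_{t,j})$ is independent of $(M,W_j,Y_{t,i})$. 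Hence for every realization $(m,w)$ of $(M,W_j)$ the Markov chain $Y_{t,i}\to X_j\to (Y_{s,j},Y_{t,j})$ holds under the conditional law, and the conditional channel marginals are still $p(y_s|x)$ and $p(y_t|x)$. Since less noisy depends only on these marginals (Remark \ref{re:om}), applying $Y_s\succeq Y_t$ at each $(m,w)$ and averaging over $(M,W_j)$ yields the displayed inequality.

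Part 2 follows by precisely the same argument with $Y_{s,i}$ in place of $Y_{t,i}$: the index $i$ never coincides with any swap index $j\in\{1,\ldots,i-1\}$, so the same memoryless factorization gives the conditional Markov chain $Y_{s,i}\to X_j\to (Y_{s,j},Y_{t,j})$ needed to invoke less noisy. The main subtlety—and the reason for setting up the hybrid in the first place—is that after replacing a single $Y_{t,j}$ by $Y_{s,j}$ the conditioning variables $(M,W_j)$ are an awkward-looking mix of past $Y_s$'s and future $Y_t$'s, yet the conditional Markov chain through $X_j$ still holds thanks to memorylessness. With that verified, the telescope and the application of less noisy are both routine; I would briefly record the trivial base case $i=1$ (both sides are zero) for completeness.
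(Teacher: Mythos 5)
Your proposal is correct and is essentially the paper's own argument: the paper likewise proves the lemma by progressively flipping one coordinate between $Y_{s,1}^{i-1}$ and $Y_{t,1}^{i-1}$, justifying each flip by the per-letter Markov chain $(M,\text{other outputs})\to X_j\to (Y_{s,j},Y_{t,j})$ (from memorylessness) together with a conditional application of the less noisy property, exactly as you do. The only cosmetic difference is the orientation of your hybrid sequence (your intermediates are $(Y_{s,1}^{j-1},Y_{t,j}^{i-1})$ while the paper uses $(Y_{t,1}^{r-1},Y_{s,r}^{i-1})$), which does not change the substance.
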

\begin{proof}
The proof of Part 1 follows by progressively flipping one co-ordinate of $Y_{s1}^{i-1}$ to $Y_{t1}^{i-1}$,
and showing that the inequality holds at each flip using the less-noisy ($Y_s \succeq Y_t$) assumption.
\iffalse
Observe that
\begin{align*}
& I(Y_{s,1}^{i-1};Y_{t,i}|M) \\
&\quad = I(Y_{s,2}^{i-1};Y_{t,i}|M) + I(Y_{s,1};Y_{t,i}|M,Y_{s,2}^{i-1}) \\
& \quad \stackrel{(a)}{\geq}  I(Y_{s,2}^{i-1};Y_{t,i}|M) + I(Y_{t,1};Y_{t,i}|M,Y_{s,2}^{i-1}) \\
& \quad = I(Y_{t,1}, Y_{s,2}^{i-1};Y_{ti}|M),
\end{align*}
where $(a)$ follows from the following two observations:
\begin{itemize}
\item $(M, Y_{s,2}^{i-1},Y_{t,i}) \to X_1 \to (Y_{s,1}, Y_{t,1})$ forms a Markov chain
\item The receiver $Y_s$ is less noisy than $Y_{t}$ implying, in particular, that
$$ I(Y_{s,1};Y_{t,i}|M,Y_{s,2}^{i-1}) \geq I(Y_{t,1};Y_{t,i}|M,Y_{s,2}^{i-1}). $$
\end{itemize}
\fi

Observe that for any $1 \leq r \leq i-1$
\begin{align*}
& I(Y_{t,1}^{r-1},Y_{s,r}^{i-1};Y_{ti}|M) \\
&\quad = I(Y_{t,1}^{r-1} ,Y_{s,r+1}^{i-1};Y_{t,i}|M) + I(Y_{s,r};Y_{t,i}|M,Y_{t,1}^{r-1} ,Y_{s,r+1}^{i-1}) \\
& \quad \stackrel{(a)}{\geq}  I(Y_{t,1}^{r-1} ,Y_{s,r+1}^{i-1};Y_{t,i}|M)  + I(Y_{t,r};Y_{t,i}|M,Y_{t,1}^{r-1} ,Y_{s,r+1}^{i-1}) \\
& \quad = I(Y_{t,1}^{r},Y_{s,r+1}^{i-1};Y_{ti}|M),
\end{align*}
where $(a)$  follows from the following  two observations:
\begin{itemize}
\item $(M, Y_{t,1}^{r-1} ,Y_{s,r+1}^{i-1} ,Y_{ti}) \to X_r \to (Y_{s,r}, Y_{t,r})$ forms a Markov chain
\item The receiver $Y_s$ is less noisy than $Y_{t}$ implying, in particular, that
$$ I(Y_{s,r};Y_{t,i}|M,Y_{t,1}^{r-1} ,Y_{s,r+1}^{i-1}) \geq I(Y_{t,r};Y_{t,i}|M,Y_{t,1}^{r-1} ,Y_{s,r+1}^{i-1}). $$
\end{itemize}

This yields us a chain of inequalities of the form
\begin{align*}
&  I(Y_{s,1}^{i-1};Y_{t,i}|M) \geq I(Y_{t,1}, Y_{s,2}^{i-1};Y_{ti}|M) \geq \cdots \\
& \quad \cdots \geq I(Y_{t,1}^{i-2}, Y_{s,i-1};Y_{ti}|M) \geq I(Y_{t,1}^{i-1};Y_{t,i}|M),
\end{align*}
thus establishing the Part 1 of the Lemma.

The proof of Part 2 follows identical arguments (replace $Y_{ti}$ by $Y_{si}$) as in the proof of Part 1 and is omitted.
\end{proof}

The main converse  follows using Fano's inequality and the above lemma.

Observe that
\begin{align*}
n R_3 &\leq I(M_3;Y_{3,1}^n) + n \epsilon_n \\
& = \sum_{i=1}^n I(M_3;Y_{3,i}| Y_{3,1}^{i-1}) + n \epsilon_n \\
& \leq \sum_{i=1}^n I(M_3,  Y_{3,1}^{i-1};Y_{3,i}) + n \epsilon_n \\
& \stackrel{(a)}{\leq} \sum_{i=1}^n I(M_3,  Y_{2,1}^{i-1};Y_{3,i}) + n \epsilon_n \\
& =  \sum_{i=1}^n I(U_i;Y_{3,i}) + n \epsilon_n,
\end{align*}
where $U_i = (M_3,  Y_{2,1}^{i-1})$. Here $(a)$ follows from Lemma \ref{le:ord}.

From Fano's inequality we also have
\begin{align*}
n R_2 &\leq I(M_2;Y_{2,1}^n|M_3) + n \epsilon_n \\
& = \sum_{i=1}^n I(M_2;Y_{2,i}| M_3, Y_{2,1}^{i-1}) + n \epsilon_n \\
& = \sum_{i=1}^n I(V_i;Y_{2,i}|U_i) + n \epsilon_n,
\end{align*}
where $V_i =  (M_2,M_3,  Y_{2,1}^{i-1})$.

Finally observe that
\begin{align*}
n R_1 &\leq I(M_1;Y_{1,1}^n|M_2,M_3) + n \epsilon_n \\
& = \sum_{i=1}^n I(M_1;Y_{1,i}| M_2, M_3 Y_{1,1}^{i-1}) + n \epsilon_n \\
& \stackrel{(a)}{\leq} \sum_{i=1}^n I(X_i;Y_{1,i}|M_2, M_3, Y_{1,1}^{i-1}) + n \epsilon_n \\
& \stackrel{(b)}{=} \sum_{i=1}^n I(X_i;Y_{1,i}|M_2, M_3)  - I(Y_{1,1}^{i-1};Y_{1,i}|M_2, M_3) +  n \epsilon_n \\
& \stackrel{(c)}{\leq} \sum_{i=1}^n I(X_i;Y_{1,i}|M_2, M_3)  - I(Y_{2,1}^{i-1};Y_{1,i}|M_2, M_3) +  n \epsilon_n 
\end{align*}
\begin{align*}
~~~& \stackrel{(d)}{\leq} \sum_{i=1}^n I(X_i;Y_{1,i}|M_2, M_3,Y_{2,1}^{i-1})  + \epsilon_n \\
& = \sum_{i=1}^n I(X_i;Y_{1,i}|V_i) + n \epsilon_n.
\end{align*}
Here $(a), (b),$ and $(d)$ follow from the data processing inequality and that
$$ (M_1,M_2,M_3, Y_{1,1}^{i-1}, Y_{2,1}^{i-1}) \to X_i \to Y_{1i}$$
forms a Markov chain. The inequality $(c)$ follows from Part 2 of Lemma \ref{le:ord}.

let $Q \in \{1,2,...,n\}$ to be a uniformly distributed random variable independent of all
other random variables. Setting $U = (U_Q,Q), V=(V_Q,Q), X=X_Q$ completes the converse in the standard way.
Clearly $U \to V \to X$ forms a Markov chain  as $V_i = (U_i,M_2)$. 

The cardinality arguments are  standard in literature (see \cite{czk78}, \cite{nae09}), and follows using the Fenchel-Eggleston strengthening of the usual Caratheodory's argument.

This completes the proof of the converse. \qed

A natural question here is whether this approach generalizes to more than three receivers. It appears to the authors that to generalize this argument to more than three receivers, one has to impose additional constraints on the class of $k$-receiver less broadcast noisy channels. Since this generalization leads to a rather interesting condition we shall define the class, and give a brief outline as to why the proof generalizes naturally under this setting.

\section{The $k$-receiver  interleavable broadcast channel}

\begin{definition}
A $k$-receiver less noisy broadcast channel is said to belong to be an {\em interleavable} broadcast channel if
there exists $k-1$ virtual receivers $V_1,...,V_{k-1}$ satisfying:
\begin{itemize}
\item $X \to V_1 \to ... \to V_{k-1}$ forms a Markov chain and
\item  The following``interleaved" less noisy condition holds:
\begin{equation}
Y_1 \succeq V_1 \succeq Y_2  \succeq \cdots Y_{k-1} \succeq V_{k-1}
\succeq Y_k. \label{eq:lnmod}
\end{equation}
\end{itemize}
\end{definition}

This class generalizes the 3-receiver less noisy broadcast channel. Indeed, the following broadcast channels are
some examples belonging to this class :
\begin{enumerate}
\item A sequence of degraded receivers, i.e. $X \to Y_1 \to ... \to Y_{k}$; set $V_i = Y_{i+1}$,
\item A sequence of "nested" less noisy receivers, i.e. $Y_s \succeq (Y_{s+1},...,Y_k)$; set $V_i = (Y_{i+1},...,Y_k),$
\item A 3-receiver less noisy sequence, i.e. $Y_1 \succeq Y_2 \succeq Y_3$; set $V_1=V_2=Y_2$.
\end{enumerate}

From Remark \ref{re:om} we know that the less-noisy ordering only depends on the marginals. Hence without loss of generality we can assume that the probability distribution factorizes as follows:
\begin{align*} 
& p (x^n, y_1^n , \ldots, y_k^n, v_1^n, \ldots , v_{k-1}^n) \\
&\quad = \prod_{i=1}^n p(x_i|x^{i-1}) p(y_{1i},..,y_{ki},v_{1i},..,v_{k-1,i}|x_i) \\
&\quad  =  \prod_{i=1}^n p(x_i|x^{i-1}) p(y_{1i},..,y_{ki}|x_i) p(v_{1i},..,v_{k-1,i}|x_i) \\
& \quad =  \prod_{i=1}^n p(x_i|x^{i-1}) p(y_{1i},..,y_{ki}|x_i) p(v_{1i}|x_{i}) \prod_{j=2}^{k-1} p(v_{ji}|v_{j-1,i})
\end{align*}

Here the first equality is due to the fact that the channel is DMC without feedback, second  is due to the fact that the assumptions on the less noisy structure just depends on the marginals, and third is due to the Markov chain $X \to V_1 \to ... \to V_{k-1}$. 

Given this structure we immediately observe the following Markov chain
\begin{align}
\label{eq:mc}
& V_{s,1}^{i-1} \to V_{s-1,1}^{i-1} \to X^n,  Y_1^n , \ldots, Y_k^n, M_1,...,M_k.
\end{align}
for $1 \leq s \leq k-1$; (set  $V_0 = X$). 

\begin{theorem}
The capacity region of a k-receiver interleavable less-noisy discrete memoryless broadcast channel is given by the union of rate triples $(R_1, \ldots, R_k)$ satisfying
\begin{align*}
R_l &\leq I(U_l;Y_l|U_{l+1}), ~ 1 \leq l \leq k,
\end{align*}
over all choices of $(U_2,...U_k,X)$ such that $(U_{k+1}=\emptyset) \to U_k \to \cdots U_2 \to (U_1 = X) \to (Y_1,Y_2, \ldots, Y_k)$ forms a  Markov chain. Further it suffices to consider $|U_{k-r}| \leq (|X| + 1)^{r+1}, ~ 1 \leq r \leq k-2$.
\end{theorem}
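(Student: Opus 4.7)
The plan is to mirror the three-receiver proof, using the interleaved chain $Y_1 \succeq V_1 \succeq Y_2 \succeq \cdots \succeq V_{k-1} \succeq Y_k$ to invoke Lemma \ref{le:ord} twice---once in each direction---at every level. Achievability is standard $k$-layer superposition coding over $(U_k, U_{k-1}, \ldots, U_2, X)$: each receiver $Y_l$ successively decodes $M_k, M_{k-1}, \ldots, M_l$, and every decoding stage succeeds because the less noisy ordering implies $I(U_j;Y_l|U_{j+1}) \geq I(U_j;Y_j|U_{j+1})$ for all $j \geq l$.

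For the converse, the key identification is
$$U_{l,i} = (M_l, M_{l+1}, \ldots, M_k, V_{l-1,1}^{i-1}), \quad 2 \leq l \leq k,$$
together with $U_{1,i} = X_i$, $U_{k+1,i} = \emptyset$, and the convention $V_0 := X$. This specializes to the $3$-receiver identification, since there $V_1 = V_2 = Y_2$. Writing $M_{l+1}^{k}$ as shorthand for $(M_{l+1},\ldots,M_k)$, the proof of each rate starts from Fano's inequality and chain rule to produce a single-letter form in which $Y_{l,1}^{i-1}$ appears in the conditioning; Lemma \ref{le:ord} is then used to swap that past for the appropriate $V_{l-1,1}^{i-1}$ or $V_{l,1}^{i-1}$. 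The two endpoint rates need only one swap each: Part 1 with $(s,t)=(V_{k-1},Y_k)$ upgrades the past for $R_k$ and yields $R_k \leq \tfrac{1}{n}\sum_i I(U_{k,i};Y_{k,i}) + \epsilon_n$; while for $R_1$ I would first apply data processing $M_1 \to X_i$, split $I(X_i;Y_{1,i}|M_2^k,Y_{1,1}^{i-1}) = I(X_i;Y_{1,i}|M_2^k) - I(Y_{1,1}^{i-1};Y_{1,i}|M_2^k)$, and invoke Part 2 with $(s,t)=(Y_1,V_1)$ to downgrade the subtracted past to $V_{1,1}^{i-1}$, collapsing back to $I(U_{1,i};Y_{1,i}|U_{2,i})$. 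The interesting case is each intermediate rate $R_l$ with $2 \leq l \leq k-1$, where the Fano identity
$$I(M_l;Y_{l,i}|M_{l+1}^{k},Y_{l,1}^{i-1}) = I(M_l,Y_{l,1}^{i-1};Y_{l,i}|M_{l+1}^{k}) - I(Y_{l,1}^{i-1};Y_{l,i}|M_{l+1}^{k})$$
requires both moves simultaneously: Part 1 with $(s,t)=(V_{l-1},Y_l)$ under conditioning $(M_l,M_{l+1}^{k})$ upgrades the past in the first (positive) term to $V_{l-1,1}^{i-1}$, while Part 2 with $(s,t)=(Y_l,V_l)$ under conditioning $M_{l+1}^{k}$ downgrades the past in the subtracted term to $V_{l,1}^{i-1}$. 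The two modified differences then recombine into a single conditional mutual information $I(U_{l,i};Y_{l,i}|U_{l+1,i})$ precisely because $I(V_{l,1}^{i-1};Y_{l,i}|M_l,V_{l-1,1}^{i-1},M_{l+1}^{k})=0$ by the Markov chain (\ref{eq:mc}).

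The Markov chain $U_{k+1} \to U_k \to \cdots \to U_2 \to X \to (Y_1,\ldots,Y_k)$ is immediate from (\ref{eq:mc}): the incremental variable $V_{l,1}^{i-1}$ added when passing from $U_{l,i}$ to $U_{l+1,i}$ is conditionally independent of everything to its left given $V_{l-1,1}^{i-1}$. A standard time-sharing random variable $Q$ uniform on $\{1,\ldots,n\}$, independent of all else, with $U_l=(Q,U_{l,Q})$ and $X=X_Q$, converts the per-letter bounds into the claimed single-letter form, and the cardinality bounds follow from the Fenchel--Eggleston strengthening of Caratheodory as in \cite{czk78,nae09}. The main obstacle I anticipate is precisely the intermediate add-and-subtract step: the two Lemma \ref{le:ord} applications involve different conditioning sets, and checking that they compose cleanly into a single conditional mutual information is where both sides of the interleaved assumption $V_{l-1} \succeq Y_l \succeq V_l$ and the Markov chain (\ref{eq:mc}) do essential work---the latter being exactly the algebraic condition that makes the interleavable class (rather than a generic less-noisy chain) the right framework for this argument.
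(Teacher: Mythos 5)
Your proposal is correct and follows essentially the same route as the paper's proof: the same auxiliary identification $U_{l,i}=(M_l,\ldots,M_k,V_{l-1,1}^{i-1})$, the same two applications of Lemma \ref{le:ord} (Part 1 via $V_{l-1}\succeq Y_l$ on the positive term and Part 2 via $Y_l\succeq V_l$ on the subtracted term), and the same recombination into $I(U_{l,i};Y_{l,i}|U_{l+1,i})$ using the Markov chain \eqref{eq:mc}, followed by standard time-sharing and cardinality arguments. The only differences are cosmetic (treating $l=1$ and $l=k$ as separate endpoint cases and setting $U_{1,i}=X_i$ directly, where the paper runs one uniform chain with $V_0=X$).
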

\begin{proof}
The proof is almost identical to that of the three receiver broadcast channel. The achievability proof is standard using superposition encoding and successive decoding and is omitted here. 

Let $M_{l+1}^k = (M_{l+1},...,M_k)$. Using Fano's
inequality, observe that for $1 \leq l \leq k$.
\begin{align*}
n R_l &\leq I(M_l;Y_{l,1}^n | M_{l+1}^k) + n \epsilon_n \\
&= \sum_{i=1}^n I(M_l;Y_{l,i}|M_{l+1}^k, Y_{l,1}^{i-1}) + n \epsilon_n \\
& = \sum_{i=1}^n I(M_l, Y_{l,1}^{i-1};Y_{l,i}|M_{l+1}^k) \\
& \quad - I(Y_{l,1}^{i-1};Y_{l,i}|M_{l+1}^k) + \epsilon_n \\
& \stackrel{(a)}{\leq}  I(M_l, Y_{l,1}^{i-1};Y_{l,i}|M_{l+1}^k) \\
& \quad - I(V_{l,1}^{i-1};Y_{l,i}|M_{l+1}^k) + \epsilon_n \\
& \stackrel{b)}{\leq}  I(M_l, V_{l-1,1}^{i-1};Y_{l,i}|M_{l+1}^k) \\
&\quad - I(V_{l,1}^{i-1};Y_{l,i}|M_{l+1}^k) + \epsilon_n \\
& \stackrel{(c)}{=}  I(M_l, V_{l-1,1}^{i-1};Y_{l,i}|M_{l+1}^k,V_{l,1}^{i-1} ) \epsilon_n \\
& = \sum_{i=1}^n I(U_{l,i};Y_{l,i}|U_{l+1, i}) + n \epsilon_n,
\end{align*}
where $U_{l,i} = (M_l^k,V_{l-1,1}^{i-1})$. We set $V_0 = X$. Here the inequalities $(a), (b)$ follow from
the Lemma \ref{le:ord} and that $V_{l-1} \succeq Y_l \succeq V_{s-1}$. The equality $(c)$ follows
from the Markov chain in \eqref{eq:mc}.

Define $Q$ to be a uniform random variable taking values in
$\{1,..,n\}$ and independent of all other random variables. As
usual, we set $U_l = (U_{l,Q},Q) ~\mbox{and}~ X=X_Q$. Since $X \to
V_1 \to \cdots \to V_{k-1}$ is a Markov chain it follows that $U_k
\to U_{k-1} \to \cdots \to U_2 \to X$ forms a Markov chain as well.
 The cardinality arguments are again standard and omitted. This completes the proof of the converse.
\end{proof}

\begin{remark}
It is not very difficult to observe that in general the 4-receiver less noisy broadcast channel is not an {\em interleavable} broadcast channel. To observe this let $Z_1 \succeq Z_2$ be any pair of less noisy but not degraded (stochastically) receivers. (Such a pair exists, see \cite{kom75} or \cite{nai08b}). Now let $Y_1,Y_2 \approx Z_1$ thus sandwiching $V_1 = Z_1$ and $Y_3,Y_4 \approx Z_2$ thus sandwiching $V_3 = Z_2$. However $X \to V_1 \to V_3$ cannot be a Markov chain by the assumption on $Z_1, Z_2$. Hence the problem of determining the capacity of $k$-receiver less noisy channel $k \geq 4$ is still very much open.
\end{remark}

\section{Conclusion}
We establish the capacity region for the 3-receiver less noisy
broadcast channel. We also compute the capacity region for a class
of k-receiver less noisy sequences that contain the previously
mentioned scenario. A new information inequality
is used to obtain the converse. and this technique also  simplifies the original proof \cite{kom75} of 
the converse of the 2-receiver broadcast channel.

\newpage

\section*{Acknowledgements}
The authors are very grateful to Prof. Abbas El Gamal for suggestions on improving the presentation of this result.

\bibliographystyle{amsplain}
\bibliography{mybiblio}
%\bibliography{mybiblio}

% that's all folks
\end{document}